\numberwithin{equation}{section}
\newtheorem{theorem}{Theorem}
\newtheorem{lemma}{Lemma}
\newtheorem{proposition}{Proposition}
\newtheorem{corollary}{Corollary}
\theoremstyle{definition}
\newtheorem{example}[theorem]{Example}
\theoremstyle{remark}
\newtheorem{remark}{Remark}
\newcommand{\I}{\mathbf{1}}
\newcommand {\Cov}{\mbox{Cov}}
\newcommand{\cadlag}{c\`adl\`ag }
\begin{document}
\title{A Black--Scholes Model with Long Memory}
\author{John A. D. Appleby}
\address{Edgeworth Centre for Financial Mathematics, School of Mathematical
Sciences, Dublin City University, Glasnevin, Dublin 9, Ireland}
\email{john.appleby@dcu.ie}
\urladdr{http://webpages.dcu.ie/\textasciitilde applebyj}

\author{John A. Daniels}
\address{Edgeworth Centre for Financial Mathematics, School of Mathematical
Sciences, Dublin City University, Glasnevin, Dublin 9, Ireland}
\email{john.daniels2@mail.dcu.ie}

\author{Katja Krol}
\address{Humboldt University, Institute of Mathematics, Unter den
Linden 6, 10099 Berlin, Germany} \email{krol@mathematik.hu-berlin.de}

\thanks{John Appleby and John Daniels were partially funded by the Science
Foundation Ireland grant 07/MI/008 ''Edgeworth Centre for Financial
Mathematics''. John Daniels was also partially funded by The Embark
Initiative operated by the Irish Research Council for Science,
Engineering and Technology (IRCSET) under the project ''Volatility
Models in Inefficient Markets''. Katja Krol was supported by the
Deutsche Telekom Stiftung.}

\subjclass{ 45D05 
  60G10 
   60G15 
    60H10 
      91B84    
       91G80}
\keywords{
 stochastic volatility, 
 long--range dependency,
 Black--Scholes model,  
 stationary solutions, 
 ARCH($\infty$) model
}

 \begin{abstract}
This note develops a stochastic model of asset volatility. The
volatility obeys a continuous--time autoregressive equation.
Conditions under which the process is asymptotically stationary and
possesses long memory are characterised. Connections with the class
of ARCH$(\infty)$ processes are sketched.
  \end{abstract}


\maketitle

\section{Introduction}
In this paper we consider the autocorrelation of the volatility of
the following stochastic functional differential equation:
\begin{align}\begin{split}\label{eq.Xintro}
dX(t)&=\left(\sigma + \beta\left(\int_{-\tau}^0 X(t+v)\lambda(\, d
v)-\int_0^t X(t-s)\kappa(\,ds)\right)\right)\,dB(t), \quad t\geq 0;\\
X(t)&=0, \quad t\in [-\tau, 0].\end{split} \end{align} Here, the
volatility is the process $V$ such that $dX(t)=V(t)\,dB(t)$. The
form of equation \eqref{eq.Xintro} is in part motivated by models of
volatility in financial mathematics in which some traders use past
information about the market to determine their investment
strategies. As indicated in the next section, this leads us to
assume that $\lambda$ and $\kappa$ are finite measures without singular
parts such that
\begin{equation} \label{eq.equalweight}
\lambda[-\tau, 0]=\int_{-\tau}^0  \lambda(ds)=\int_0^\infty \kappa(ds)=\kappa[0, \infty).
\end{equation}
In the context of this work, we show that $X$ can be thought of as
de--trended market returns, and hence \eqref{eq.Xintro} leads
automatically to a Black--Scholes type model with memory.

In financial markets, it is of practical interest to determine
whether market returns or other important indicators, such as the
volatility $V$, possess predictable components. Therefore, under the
condition \eqref{eq.equalweight}, we give necessary and sufficient
conditions under which $V$ is an asymptotically weakly stationary
process, with non--trivial limiting autocovariance function. We do
this by establishing that $V$ is a continuous--time analogue of
solutions of stochastic difference equations structurally related to
the class of ARCH($\infty$) processes.

It is also of interest to see whether such processes in finance possess
\emph{long memory} or \emph{long range dependence} (\cite{Cont,
taqqu}), in the usual sense that the limiting autocovariance
function $\gamma$ of $V$ has the property
\begin{equation} \label{eq.longmem}
\int_{0}^\infty \gamma(s)\,ds = +\infty.
\end{equation}
We are able to characterise whether $V$ has long memory or not in
the case when $\kappa$ is a positive measure, by proving that
\eqref{eq.longmem} holds if and only if $\kappa$ has infinite first
moment. We also establish the exact rate of convergence to zero of
$\gamma(t)$ as $t\to\infty$ in the case when $\kappa$ is absolutely
continuous with regularly varying density $k$.

We consider also discrete analogues of \eqref{eq.Xintro}, and demonstrate that the stationarity
of the volatility and presence of long range dependence can be characterised
in a similar manner to the continuous case.
\section{Motivation from finance}
Let $S=\{S(t):t\geq 0\}$ be the stock price of a single risky asset
whose evolution is governed by
\begin{equation} \label{eq.stock}
dS(t)=\mu S(t)\, dt+S(t)\,dX(t), \quad t\geq 0; \quad S(0)=s_0>0,
\end{equation}
and $X$ obeys \eqref{eq.Xintro}. This means that shares in the stock
started trading at time $t=0$. In what follows, we assume that $\lambda$
and $\kappa$ have finite total variation, which implies that there is a
unique continuous adapted processes $X$ which satisfies
\eqref{eq.Xintro} and which is moreover a semimartingale. Therefore
there is a unique positive continuous adapted processes $S$ which satisfies
\eqref{eq.stock}. It is reasonable to
call the process $V=\{V(t):t\geq 0\}$ which is defined by
\begin{equation} \label{def.vol}
X(t)=\int_0^t V(s)\,dB(s), \quad t\geq 0,
\end{equation}
the \emph{volatility} of the stock price because from
\eqref{eq.stock} and \eqref{eq.Xintro} we have
\[
dS(t)=\mu S(t)\,dt + V(t)S(t)\,dB(t), \quad t\geq 0.
\]

We now motivate the form of \eqref{eq.Xintro}, and in particular
begin by explaining the economic interpretation of $X$. The
\emph{cumulative return} $R=\{R(t):t\geq 0\}$ on the stock is
defined by the identity $dS(t)=S(t)\,dR(t)$ for $t\geq 0$ and
$R(0)=0$. From this, \eqref{eq.stock} and \eqref{def.vol} we see
that $X$ is the \emph{de--trended cumulative return}, because
$X(t)=R(t)-\mu t$ for $t\geq 0$.

At time $t\geq 0$, traders in the market take a weighted average of
the de--trended returns over the last $\tau>0$ units of time, giving
a short--run indicator of returns $\int_{-\tau}^0 X(t+s)\lambda(ds)$.
They also form a long--run indicator of returns by taking a weighted
average of the de--trended returns over the entire history of the
asset, according to $\int_0^\infty X(t-s)\,\kappa(ds)$. Since there
is no trading before time $t=0$, we set $X(t)=0$ for $t\leq 0$, so
that the long--term indicator is also given by $\int_0^t
X(t-s)\,\kappa(ds)$. In order that the indicators represent weighted
averages with the same weight, we require that $\lambda$ and $\kappa$ obey
\eqref{eq.equalweight}. The traders believe that these indicators
signal that the market is far from equilibrium whenever the
indicators differ significantly, and this causes the traders to
trade greater amounts of the stock. It has been observed in real
financial markets that the volume of trade is positively correlated
with the volatility of the asset (see e.g.~\cite{epps},
\cite{tauchen} and the references therein), which leads to the
simple model that the volatility depends linearly on the trading
volume, which itself depends on the difference between these
indicators. By this reasoning, we arrive at
\begin{equation}  \label{eq.VX}
V(t)=\sigma+\beta\left(\int_{-\tau}^0 X(t+v)\lambda(\, d v)-\int_0^t
X(t-s)\kappa(\,ds)\right), \quad t\geq 0.
\end{equation}
Therefore, using \eqref{eq.VX} and \eqref{def.vol}, we see that $X$
obeys \eqref{eq.Xintro}, because $X$ is identically zero on
$(-\infty,0]$. It can be readily shown for $\Delta>\delta\ge 0$ that
\[
\Cov(R(t+\delta)-R(t),R(t+\delta+\Delta)-R(t+\Delta))=0, \quad t\geq
0,
\]
so the $\delta$--returns over non--overlapping time intervals are uncorrelated
at all time horizons. Hence the market is efficient in the sense of Fama,
see e.g.\cite{Fama:70}.

The parameter $\beta$ represents the sensitivity of these
trend--following traders; if it is large, the traders are sensitive
and have a large impact on the price dynamics. If no such traders
were present, then $\beta=0$, and $S$ obeys the classical
Black--Scholes stochastic differential equation.

In the case when $\beta\neq 0$, our model of asset price evolution
depends on the path of the price process, and the returns follow a
stochastic functional differential equation (SFDE). Other models of
financial markets where price evolution is described by a SFDE
include \cite{anh, Ap_Rie_Swo, App_Dan, bouchaud, hobson}.

\section{Mathematical preliminaries}
By $M(I)$  we denote the set of all signed  $\sigma$-finite Borel
measures on $I\subseteq \mathbb{R}$ with values in $\mathbb{R}$.
Let $|\kappa|$ and  $\|\kappa\|$ denote the variation and the total
variation of a measure $\kappa\in M(I)$ respectively.

Let $(\Omega, \mathcal{F}, \mathbb{P})$ be a complete probability
space equipped with a filtration $\mathbb{F}=(\mathcal{F}_{t})_{t
\geq 0}$, and let $B=\{B(t): t\geq 0\}$ be a one--dimensional
Brownian motion on this probability space. Let $\mathbb{D}$ denote
the set of all adapted \cadlag processes. By $\mathcal{H}^*$ we
denote the set of all $\mathbb{F}$--adapted processes $X=\{X(t):t\ge
0\}$, satisfying
\begin{equation} \label{eq.Xfinitemsq_general}
\mathbb{E}\left[\max_{0\leq t\leq T} |X(t)|^2 \right]<+\infty, \quad
\text{for every $T>0$.}
\end{equation}
  Let $\sigma,\beta\in \mathbb{R}$. Suppose that the measure $\kappa$ has finite total variation. Then, the following stochastic differential equation has a unique strong solution.
\begin{equation}\label{eq.X_general}
dX(t)=\left(\sigma + \beta\left(\int_0^t
X(t-s)\kappa(\,ds)\right)\right)\,dB(t), \quad t\geq 0;\qquad  X(0)=0.
\end{equation} To see this,  we introduce the well--defined operator
$F\,:\, \mathbb{D}\to \mathbb{D}$
\begin{equation} \label{def.V_general}
F(X\I_{[0, t]})=\beta\int_0^t X(t-s)\kappa(\,ds), \quad t\geq 0.
\end{equation}
The process $F$ is functional Lipschitz with $F(0)=0$ in the sense
of   \cite[p.~250]{protter}, since it satisfies for two \cadlag
processes $X, Y$
\[ |F(X\I_{[0, t]})-F(Y\I_{[0, t]})|\le |\beta|\|\kappa\|\sup_{s\le t} |X(s)-Y(s)|,\]
almost surely for each $t\ge 0$. Hence, the equation
\eqref{eq.X_general} fulfills all the assumptions of Lemma V.2,  in
\cite{protter} and  has a unique strong solution.  We refer to this
process $X$ as the solution of \eqref{eq.X_general}. Moreover, by
Doob's inequality (cf.~\cite[Theorem I.20]{protter}) the solution
$X$ belongs to $\mathcal{H}^*$.

We denote the spaces of real--valued  integrable and continuous
functions by $L^1(0, \infty)$ and
$C([0,\infty);\mathbb{R})$ respectively. Then, $L^1_{\text{loc}}(0,
\infty)$ denotes the space of all Lebesgue measurable
functions, whose restrictions to compact subsets of $\mathbb{R}$
belong to $L^1$.

We write $f\sim g$ for $x\to x_0\in \mathbb{ R}\cup \{\pm \infty\}$
if $ \lim_{x\to x_0} f(x)/g(x)=1.$
 A function $L\,:\,[0, \infty)\to (0, \infty)$ is {\em slowly varying at infinity} if  $\lim_{t\to \infty} L(xt)/L(t)=1$
 holds for  all $x>0.$
A function $f$ {\em varies regularly with index $\alpha\in
\mathbb{R}$}, $f\in \mbox{RV}_\infty(\alpha)$,  if it is of the form
$
 f(t)=t^\alpha L(t)$
 with $L$ slowly varying, see e.g.~\cite[Ch.\ VIII.8]{Feller:1}.
\label{prelim}
\subsection{Assumptions on Equation \eqref{eq.Xintro}}
Next, we give some concrete assumptions under which
\eqref{eq.Xintro} has a well--defined and unique solution, and
introduce some useful notation. Let $\sigma,\beta\in \mathbb{R}$,
$\tau\in \mathbb R_+$. Consider the stochastic differential equation
with delay given by \eqref{eq.Xintro}. We assume that the two
measures $\kappa\in M(\mathbb R_+)$ and $\lambda\in M([-\tau, 0])$
are decomposable into  absolutely continuous and discrete parts:
there exist Lebesgue integrable functions $k\in L^1[0, \infty)$,
$\ell\in L^1[-\tau, 0]$, real--valued sequences $(\kappa_j)_{j\in
\mathbb{N}}$, $(\lambda_j)_{0\le j\le N} $, $N>0$, and monotone
increasing  positive sequences  $(\rho_j)_{j\in \mathbb{N}}$,
$(\tau_j)_{0\le j\le N}$, $\tau_N\le\tau$, so that
\begin{equation}
\kappa(\,ds)=\sum_{j=0}^\infty \kappa_j \I_{\{\rho_j\}}(\,ds)+k(s)\,ds,
 \quad \lambda(\,ds)=\sum_{j=0}^N \lambda_j \I_{\{-\tau_j\}}(\,ds) +\ell(s)\,ds.\end{equation}
Moreover, $\kappa$ and $\lambda$ satisfy
\begin{equation} \label{eq.khypot}
k\in L^1[0,\infty), \quad  \sum_{j=0}^\infty |\kappa_j|<\infty;\qquad
\int_0^\infty \kappa(\,ds)=\int_{-\tau}^0 \lambda(\,ds).
\end{equation}
The last equality in equation \eqref{eq.khypot} can be written as
\begin{equation} \label{eq.kweight} \int_0^\infty k(s)\,ds
+\sum_{j=0}^\infty \kappa_j =\int_{-\tau}^0 \ell(s)\,
ds+\sum_{j=0}^N \lambda_j.\end{equation} Then, as stated in Section
\ref{prelim}, \eqref{eq.Xintro} has a unique strong solution.
Moreover, this solution obeys $X\in \mathcal{H}^*$. If $\beta=0$,
$X$ is standard Brownian motion. If $\beta\neq 0$, but $\sigma=0$,
it transpires that $X(t)=0$ for all $t\geq 0$ almost surely. For
this reason, we suppose that $\sigma\neq 0$ and $\beta\neq 0$.

By hypothesis we see that $  K :[0,\infty)\to
\mathbb{R}$ given by
\begin{align} \label{def.kappa}
 K (x)
&=-\int_{-\tau}^{-(x\wedge \tau)} \lambda(\, ds) + \int_x^\infty
\kappa(\,ds), \quad x\ge 0,\end{align} is well--defined. Moreover,
$ K $ can be written as\begin{equation} \label{kappa_concrete}
 K (x)=\left\{ \begin{array}{ll}\sum_{\{j\,:\,\rho_j\ge x\}}\kappa_j +  \int_x^\infty k(s)\,ds, \quad &x\geq \tau,\\
-\sum_{\{ j\,:\, \tau_j\ge x\}}\lambda_j-\int_{-\tau}^{-x} \ell(s)\, d
s+\sum_{\{j\,:\,\rho_j\ge x\}}\kappa_j+\int_x^\infty k(s)\, ds, \quad
&x<\tau\end{array}\right. .
\end{equation}

\section{Autoregression of the Volatility Process}
We introduce the well--defined process $V=\{V(t):t\geq 0\}$
\begin{equation} \label{def.V}
V(t)=\sigma + \beta\left(\int_{-\tau}^0 X(t+v)\lambda(\, d v) -\int_0^t
X(t-s)\kappa(\,ds)\right), \quad t\geq 0.
\end{equation}
Since $X\in \mathcal{H}^*$ and by \eqref{eq.khypot} we have that
$V\in \mathcal{H}^*$ and $ \mathbb{E}[V(t)]=\sigma, \; t\geq 0. $ By
\eqref{eq.Xintro} and \eqref{def.V} we have $dX(t)=V(t)\,dB(t)$.
Therefore $V$ is the volatility process. We see also that
\begin{equation*}
\mathbb{E}[X(t)]=0, \quad \mathbb{E}[X^2(t)]=\int_0^t
\mathbb{E}[V^2(s)]\,ds, \quad t\geq 0.
\end{equation*}

We show that $V$ obeys a linear stochastic integral equation, and
deduce that $t\mapsto\mathbb{E}[V^2(t)]$ satisfies a linear Volterra
integral equation.
\begin{proposition}  \label{proposition.Vint}
Suppose that $\kappa$ obeys \eqref{eq.khypot}. Then $V$ defined by
\eqref{def.V} obeys
\begin{equation} \label{eq.Vinteqn}
V(t)=\sigma+\beta \int_0^t  K (t-s)V(s)\,dB(s), \quad t\geq 0.
\end{equation}
Moreover,
\begin{equation} \label{eq.msqVinteqn}
\mathbb{E}[V^2(t)]=\sigma^2+ \beta^2 \int_0^t
 K ^2(t-s)\mathbb{E}[V^2(s)]\,ds, \quad t\geq 0.
\end{equation}
\end{proposition}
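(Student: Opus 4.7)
The plan is to substitute the stochastic integral representation $X(u) = \int_0^u V(r)\,dB(r)$ (valid for $u \ge 0$; recall $X\equiv 0$ on $(-\infty,0]$) into the definition \eqref{def.V} of $V$, and then apply a stochastic Fubini theorem so as to interchange the pathwise integrals against $\lambda, \kappa$ with the Itô integral. Since $X(t+v) = 0$ for $v \le -t$, the effective range of the $\lambda$--integral is $[-(t\wedge\tau), 0]$, so after substitution
\begin{equation*}
V(t) - \sigma = \beta\!\left( \int_{-(t\wedge\tau)}^0\!\!\!\int_0^{t+v}\!\! V(r)\,dB(r)\,\lambda(dv) \; - \int_0^t\!\!\!\int_0^{t-s}\!\! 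V(r)\,dB(r)\,\kappa(ds) \right).
\end{equation*}

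I expect the main technical point to be the stochastic Fubini exchange, which I would justify via Protter's stochastic Fubini theorem in \cite{protter}. The required integrability holds because $X\in\mathcal{H}^*$ yields $\mathbb{E}\!\int_0^T V(r)^2\,dr < \infty$ via Doob's inequality, while $\lambda$ and $\kappa$ have finite total variation by \eqref{eq.khypot}. After the exchange, each double integral collapses to a single Itô integral whose integrand is $V(r)$ times the measure of an interval: the $\lambda$--term contributes $V(r)\,\lambda([-(t-r)\wedge\tau,\,0])$ and the $\kappa$--term contributes $V(r)\,\kappa([0,\,t-r])$, integrated against $dB(r)$ on $[0,t]$.

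It remains to identify the combined coefficient with $K(t-r)$. This is an algebraic manipulation using the balance condition \eqref{eq.kweight}: writing $\lambda([-(x\wedge\tau),0]) = \lambda([-\tau,0]) - \lambda([-\tau,-(x\wedge\tau)])$ and applying $\lambda([-\tau,0]) = \kappa([0,\infty))$ gives
\begin{equation*}
\lambda([-(x\wedge\tau),0]) - \kappa([0,x]) = -\lambda([-\tau,-(x\wedge\tau)]) + \kappa([x,\infty)) = K(x),
\end{equation*}
with any endpoint-atom discrepancies supported on a countable set of $r$-values of Lebesgue measure zero and hence invisible in the Itô integral. This establishes \eqref{eq.Vinteqn}. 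Finally, \eqref{eq.msqVinteqn} is obtained by applying the Itô isometry to \eqref{eq.Vinteqn}: the zero-mean stochastic integral kills the cross-term with $\sigma$, and since $K$ is bounded on $[0,t]$ (being the difference of measures of bounded intervals under finite measures), the isometry yields $\mathbb{E}[V^2(t)] = \sigma^2 + \beta^2\int_0^t K^2(t-r)\,\mathbb{E}[V^2(r)]\,dr$. The hardest part is verifying the hypotheses of the stochastic Fubini step; the algebraic identification of the kernel and the second-moment computation are then routine.
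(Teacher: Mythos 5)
Your proposal is correct and follows essentially the same route as the paper: substitute the Itô-integral representation of $X$ into \eqref{def.V}, apply the stochastic Fubini theorem (Protter) justified by $V\in\mathcal{H}^*$, identify the resulting kernel with $K(t-u)$ via the balance condition \eqref{eq.kweight}, and then obtain \eqref{eq.msqVinteqn} by squaring and the Itô isometry. Your explicit remark that endpoint atoms only affect a Lebesgue-null set of $r$-values is a small extra care the paper leaves implicit, but the argument is the same.
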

\begin{proof}
Since $V\in \mathcal{H}^*$, by a stochastic Fubini's theorem  (e.g.
\cite[Ch.IV.6, Thm.~65]{protter}) we have
\begin{align*}
V(t)&=\sigma + \beta\left(\int_{-\tau}^0 X(t+s)\lambda(\, ds) -\int_0^tX(t-s) \kappa(\,ds)\right)\\
&=\sigma + \beta\left(\int_0^t \left\{\int_{-\tau \vee
(u-t)}^0\lambda(\, ds) -\int_0^{t-u} \kappa(\,ds)\right\}
V(u)\,dB(u)\right).
\end{align*}
Now by \eqref{def.kappa} and \eqref{eq.kweight} we have
\begin{align*}
V(t)
&=\sigma + \beta\left(\int_0^t \left\{-\int_{-\tau}^{-\tau \vee (u-t)} \lambda(\, ds)
+\int_{t-u}^\infty \kappa(\,ds)\right\} V(u)\,dB(u)\right)\\
&=\sigma + \beta\int_0^t  K (t-u) V(u) \,dB(u),
\end{align*}
as required. By \eqref{eq.Vinteqn} we have for each $t\geq 0$
\begin{equation*}
V^2(t)=\sigma^2+ 2\sigma\beta\int_0^t  K (t-s)V(s)\,dB(s) +
\beta^2 \left(\int_0^t  K (t-s)V(s)\,dB(s)\right)^2.
\end{equation*}
By considering $t\geq 0$ as fixed, and using the fact that $V\in
\mathcal{H}^*$, we can apply It\^o's isometry to get
\eqref{eq.msqVinteqn} as required.
\end{proof}
Given that $V$ has constant expectation, it is interesting to ask
whether its variance (or equivalently, its second moment) settles
down. From \eqref{eq.msqVinteqn} we can readily determine necessary
and sufficient conditions for it to do so.
\begin{proposition} \label{prop.varVconst}
Suppose that $\kappa$ obeys \eqref{eq.khypot} and that $ K $  obeys
$ K \in L^2(0,\infty)$. Suppose that $V$ is defined by
\eqref{def.V}. If
\begin{equation} \label{eq.kappaintsmall}
\beta^2 \int_0^\infty  K ^2(s)\,ds < 1,
\end{equation}
then
\begin{equation}    \label{eq.msqlimit}
\lim_{t\to\infty}
\mathbb{E}[V(t)^2]=\frac{\sigma^2}{1-\beta^2\int_0^\infty
 K ^2(s)\,ds}.
\end{equation}
\end{proposition}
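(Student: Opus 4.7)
The plan is to read \eqref{eq.msqVinteqn} as a linear Volterra convolution equation for $m(t):=\mathbb{E}[V(t)^2]$ and solve it explicitly via a resolvent (variation of constants) representation. Set $g(t):=\beta^{2}K^{2}(t)$. Since $K\in L^{2}(0,\infty)$, the function $g$ lies in $L^{1}(0,\infty)$, and its total mass $G:=\int_{0}^{\infty}g(s)\,ds=\beta^{2}\int_{0}^{\infty}K^{2}(s)\,ds$ satisfies $G<1$ by the hypothesis \eqref{eq.kappaintsmall}. In this notation Proposition \ref{proposition.Vint} becomes
\[
m(t)=\sigma^{2}+(g\ast m)(t),\qquad t\geq 0.
\]

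Next I would construct the resolvent kernel of $g$ as the Neumann series
\[
r:=\sum_{n=1}^{\infty}g^{\ast n},
\]
where $g^{\ast n}$ denotes the $n$-fold convolution of $g$ with itself. Young's inequality gives $\|g^{\ast n}\|_{L^{1}}\leq G^{n}$, so the strict inequality $G<1$ forces the series to converge absolutely in $L^{1}(0,\infty)$, with $\int_{0}^{\infty}r(s)\,ds=\sum_{n\geq 1}G^{n}=G/(1-G)$. By construction $r$ satisfies the resolvent identity $r=g+g\ast r$.

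A short computation, using $r=g+g\ast r$ and Fubini's theorem, confirms that
\[
m(t)=\sigma^{2}+\sigma^{2}\int_{0}^{t}r(s)\,ds,\qquad t\geq 0,
\]
solves $m=\sigma^{2}+g\ast m$; since $V\in\mathcal{H}^{*}$ makes $m$ locally bounded, uniqueness for the Volterra equation (a one-line Gronwall argument against the $L^{1}_{\mathrm{loc}}$ kernel $g$) shows that this is indeed the $m$ we started with. Letting $t\to\infty$ and using $r\in L^{1}$ gives
\[
\lim_{t\to\infty}m(t)=\sigma^{2}\Bigl(1+\int_{0}^{\infty}r(s)\,ds\Bigr)=\frac{\sigma^{2}}{1-G}=\frac{\sigma^{2}}{1-\beta^{2}\int_{0}^{\infty}K^{2}(s)\,ds},
\]
which is \eqref{eq.msqlimit}.

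There is no real obstacle: every step is standard Volterra theory, and the only genuine input is the contraction condition $G<1$, which is precisely what makes the Neumann series for $r$ a convergent geometric sum. The sole point requiring a line of care is invoking uniqueness so as to identify $m$ with the resolvent representation; all the heavy lifting (stochastic Fubini, Ito isometry) was already done in Proposition \ref{proposition.Vint}.
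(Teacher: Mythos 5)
Your argument is correct, and it is at heart the same resolvent method the paper uses: both proofs rest on the representation $\mathbb{E}[V^2(t)]=\sigma^2\bigl(1+\int_0^t r(s)\,ds\bigr)$ for the resolvent $r$ of the kernel $g=\beta^2 K ^2$, and then on showing $\int_0^\infty r(s)\,ds=G/(1-G)$ with $G=\beta^2\int_0^\infty  K ^2(s)\,ds$. Where you differ is in how $r$ and its total integral are obtained. The paper imports existence of $r\in L^1_{\text{loc}}$ and the variation-of-constants formula from Gripenberg--Londen--Staffans (Theorems 2.3.1 and 2.3.5), and then proves $r\in L^1(0,\infty)$ a posteriori: it bounds $\int_0^T r$ by $\alpha+\alpha\int_0^T r$ with $\alpha=G<1$, and must then pass to the limit in $\int_0^T a(T-s)r(s)\,ds\to \alpha\int_0^\infty r(s)\,ds$ to pin down the exact value of $\int_0^\infty r$. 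You instead build $r$ explicitly as the Neumann series $\sum_{n\ge 1}g^{\ast n}$; since $\|g^{\ast n}\|_{L^1}\le G^n$ and $G<1$, convergence in $L^1(0,\infty)$, the identity $r=g+g\ast r$, and the exact value $\int_0^\infty r=\sum_{n\ge1}G^n=G/(1-G)$ all drop out of the geometric series at once. The price you pay is having to verify by hand what the paper gets from the cited theorems, namely that $\sigma^2(1+\sigma^{-2}\cdot\sigma^2\int_0^t r)$ really is $\mathbb{E}[V^2(t)]$; your uniqueness step is adequate for this, since $V\in\mathcal{H}^*$ makes $t\mapsto\mathbb{E}[V^2(t)]$ locally bounded and the contraction estimate $\sup_{[0,T]}|d|\le G^n\sup_{[0,T]}|d|$ for the difference $d$ of two solutions closes the gap because $G<1$. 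Net effect: your route is more self-contained and computes the limit more directly, while the paper's route leans on standard Volterra machinery and incidentally records the nonnegativity of $r$, which it reuses elsewhere.
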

The fact that $\lim_{t\to\infty} \mathbb{E}[V^2(t)]$ is always greater than
$\sigma^2$ shows that the presence of the trend following speculators increases
market volatility relative to the level $\sigma^2$, which would be obtained in
their absence (where $\beta=0$). In other words, the presence of these traders makes the market more risky,
and leads to greater fluctuations. This is similar to findings of \cite{delong}, in which the presence of noise traders
increases the risk for informed investors.
\begin{proof}
From representation \eqref{kappa_concrete} we see that satisfies $ K
\in L^1_{\text{loc}}(0, \infty)$. Therefore, Theorem 2.3.1 in
\cite{GrLoSt90} applies and there exists a unique solution $r\in
L^1_{\text{loc}}(0, \infty)$ of
\[
r(t)=\beta^2  K ^2(t)+\beta^2\int_0^t  K ^2(t-s)r(s)\,ds,
\quad t\geq 0.
\]
Since $ K ^2$ is  nonnegative, the iteration method in the proof
of theorem 2.3.1 in \cite{GrLoSt90} yields, that the resolvent $r$
is also nonnegative. By Theorem 2.3.5 in in the same book, the
process $\mathbb{E}[V^2(t)]$ is continuous and satisfies
\[
\mathbb{E}[V^2(t)]=\sigma^2+\int_0^t
r(t-s)\sigma^2\,ds=\sigma^2\left(1+\int_0^t r(s)\,ds\right).
\]
Therefore if $\int_0^\infty r(s)\,ds<+\infty$, we have the desired
result. Define $a:[0,\infty)\to[0, \infty)$ by
$a(t):=\beta^2\int_0^t  K ^2(s)\,ds$ for $t\geq 0$.  For any
$T>0$ we have
\begin{align*}
\int_0^T r(t)\,dt &=\beta^2 \int_0^T  K ^2(t)\,dt +\beta^2\int_0^T \int_0^t  K ^2(t-s)r(s)\,ds\,dt\\
&=\beta^2 \int_0^T  K ^2(t)\,dt +\beta^2\int_0^T \int_s^T  K ^2(t-s)\,d t \,r(s)\,ds   \\
&=a(T) +\int_0^T a(T-s)r(s)\,ds.
\end{align*}
Since $a(t)\uparrow \alpha:=\beta^2\int_0^\infty  K ^2(s)\,ds<1$
as $t\to\infty$, for all $T\geq 0$ we have $ \int_0^T r(t)\,dt \leq
\alpha +\alpha\int_0^T  r(s)\,ds.$ Since $\alpha\in (0,1)$ we have $
\int_0^T r(t)\,dt \leq \alpha/(1-\alpha)$, for all $T\geq 0$.
Therefore we have $r\in L^1(0,\infty)$. Since $a$ converges as $T\to
\infty$ and $r\in L^1(0,\infty)$ we have that
\[
\lim_{T\to\infty} \int_0^T a(T-s)r(s)\,ds=\alpha \int_0^\infty
r(s)\,ds.
\]
Therefore we have $ \int_0^\infty r(s)\,ds = \alpha +
\alpha\int_0^\infty r(s)\,ds, $ from which we infer
\[
\int_0^\infty r(s)\,ds = \frac{\alpha}{1-\alpha}
=\frac{\beta^2\int_0^\infty  K ^2(s)\,ds}{1-\beta^2\int_0^\infty
 K ^2(s)\,ds}.
\]
Therefore we have
\[
\lim_{t\to\infty}\mathbb{E}[V^2(t)]=\sigma^2\left(1+\int_0^\infty
r(s)\,ds\right) =\sigma^2\left(1+\frac{\beta^2\int_0^\infty
 K ^2(s)\,ds}{1-\beta^2\int_0^\infty  K ^2(s)\,ds} \right)
\]
which confirms the result.
\end{proof}

\begin{remark}  \label{rem.msqtoinfty}
In the case 
 $ K \not \in L^2(0,\infty)$,  the
solution of \eqref{eq.msqVinteqn} obeys $ \lim_{t\to\infty}
\mathbb{E}[V(t)^2]=+\infty. $ To see this notice first, by
\eqref{eq.msqVinteqn}, that $\mathbb{E}[V^2(t)]\geq \sigma^2$ for
all $t\geq 0$. Therefore by \eqref{eq.msqVinteqn}, we have
\[
\mathbb{E}[V^2(t)]=\sigma^2+ \beta^2 \int_0^t
 K ^2(t-s)\mathbb{E}[V^2(s)]\,ds
\ge \sigma^2+ \sigma^2\beta^2 \int_0^t  K ^2(s)\,ds.
\]
Therefore as $t\to\infty$ and $ K \not \in L^2(0,\infty)$, we
have that $\mathbb{E}[V^2(t)]\to\infty$ as $t\to\infty$.
\end{remark}

\section{Asymptotic stationarity of $V$ and Long Memory in $V$}
In our next result, we show that  $ K \in L^2(0,\infty)$ and
$ K $ obeying \eqref{eq.kappaintsmall} are necessary conditions
for $V$ to be asymptotically stationary. To fix terminology, we say
that a real scalar process $U=\{U(t):t\geq 0\}$ is \emph{(weakly)
asymptotically stationary} if there exists $\theta\in \mathbb{R}$ and a
function $\gamma:[0,\infty)\to \mathbb{R}$ such that
$\lim_{t\to\infty}\mathbb{E}[U(t)]=\theta$ and
$\lim_{t\to\infty}\Cov(U(t),U(t+\Delta))=\gamma(\Delta)$ for each
$\Delta\geq 0$.
\begin{lemma} \label{lemma.necctns}
Suppose that $\kappa$ obeys \eqref{eq.khypot}. Suppose that $V$ defined
by \eqref{def.V} is asymptotically stationary. Then $ K \in
L^2(0,\infty)$ and $ K $ obeys \eqref{eq.kappaintsmall}.
\end{lemma}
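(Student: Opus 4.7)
My plan is to exploit the linear Volterra equation \eqref{eq.msqVinteqn} for $m(t) := \mathbb{E}[V^2(t)]$ established in Proposition \ref{proposition.Vint}. First I would observe that asymptotic stationarity, specialised to $\Delta = 0$, forces $\Var(V(t)) \to \gamma(0)$ for some finite $\gamma(0) \geq 0$; combined with the identity $\mathbb{E}[V(t)] = \sigma$, this yields $m(t) \to L := \gamma(0) + \sigma^2$. Since $m(t) \geq \sigma^2$ (as recorded in Remark \ref{rem.msqtoinfty}), we also have $L \geq \sigma^2 > 0$.

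With a finite limit $L$ in hand, the assertion $K \in L^2(0,\infty)$ drops out immediately from the contrapositive of Remark \ref{rem.msqtoinfty}: if $K$ were not square-integrable, the lower bound $m(t) \geq \sigma^2 + \sigma^2 \beta^2 \int_0^t K^2(s)\,ds$ recorded there would drive $m(t) \to \infty$, contradicting the finiteness of $L$. To obtain \eqref{eq.kappaintsmall}, set $\alpha := \beta^2 \int_0^\infty K^2(s)\,ds$, rewrite \eqref{eq.msqVinteqn} via the substitution $u = t - s$ as
\[
m(t) - \sigma^2 \;=\; \beta^2 \int_0^t K^2(u)\, m(t-u)\, du \;=\; \beta^2 \int_0^\infty K^2(u)\, m(t-u)\, \I_{[0,t]}(u)\, du,
\]
and apply Fatou's lemma to the nonnegative integrand, using $m(t-u) \to L$ and $\I_{[0,t]}(u) \to 1$ pointwise. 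The left-hand side tends to $L - \sigma^2$, so Fatou delivers $L - \sigma^2 \geq \alpha L$, that is $L(1 - \alpha) \geq \sigma^2 > 0$. Since $L > 0$, this rules out both $\alpha = 1$ and $\alpha > 1$, forcing $\alpha < 1$, which is exactly \eqref{eq.kappaintsmall}.

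I do not anticipate any serious obstacle. The one delicate point is the passage to the limit under the integral, and Fatou's lemma handles it with no integrability hypothesis beyond what is already available, in precisely the one-sided direction required for the strict bound. A dominated-convergence variant is also available, since the continuous function $m$ is convergent and hence bounded so that $K^2(u)\sup_s m(s) \in L^1$ dominates the integrand; this would additionally recover the closed-form value $L = \sigma^2/(1-\alpha)$ established in Proposition \ref{prop.varVconst}.
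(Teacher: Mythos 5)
Your proposal is correct and follows essentially the same route as the paper: both extract the finite limit $L=\gamma(0)+\sigma^2\ge\sigma^2>0$ of $\mathbb{E}[V^2(t)]$ from asymptotic stationarity, both obtain $K\in L^2(0,\infty)$ from the contrapositive of Remark~\ref{rem.msqtoinfty}, and both pass to the limit in \eqref{eq.msqVinteqn} to force $\beta^2\int_0^\infty K^2(s)\,ds<1$. The only (harmless) variation is that you use Fatou's lemma to get the one-sided bound $L-\sigma^2\ge \alpha L$, whereas the paper asserts the exact convolution limit $\int_0^t K^2(t-s)\mathbb{E}[V^2(s)]\,ds\to \alpha L/\beta^2$; your inequality suffices and requires slightly less justification.
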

\begin{proof}
Since $V$ is asymptotically stationary, it follows that there is a
finite $g\geq 0$ such that $ g:=\lim_{t\to\infty}
\text{Cov}(V(t),V(t)). $ Since $\mathbb{E}[V(t)]=\sigma$ for all
$t\geq 0$, we have that there is an $a\in \mathbb{R}$ such that
\begin{equation} \label{eq.msqVttolimit}
a:=\lim_{t\to\infty} \mathbb{E}[V(t)^2]=\lim_{t\to\infty}
\left\{\text{Cov}(V(t),V(t))+\mathbb{E}[V(t)]^2\right\} =g+\sigma^2.
\end{equation}
If $ K \not\in L^2(0,\infty)$, we have by
Remark~\ref{rem.msqtoinfty} that $\lim_{t\to\infty}
\mathbb{E}[V(t)^2]=+\infty$, which contradicts
\eqref{eq.msqVttolimit}. Therefore we must have that $ K \in
L^2(0,\infty)$.

Since $V$ obeys \eqref{eq.msqVttolimit}, we see that $a\geq
\sigma^2>0$. Since $ K \in L^2(0,\infty)$ we have
\[
\lim_{t\to\infty} \int_0^t  K ^2(t-s)\mathbb{E}[V^2(s)]\,ds =
\int_0^\infty  K ^2(s)\,ds \cdot a.
\]
Therefore from  \eqref{eq.msqVinteqn} we have that
\[
a=\sigma^2+ \beta^2 \int_0^\infty
 K ^2(s)\,ds \cdot a.
\]
Since $a>0$, we must have $\beta^2\int_0^\infty  K ^2(s)\,ds< 1$, as
required.
\end{proof}
\begin{remark} \label{rem.msqconverge}
Perusal of the proof of Lemma~\ref{lemma.necctns} shows that if
$\kappa$ obeys \eqref{eq.khypot}, $ K $ is defined by
\eqref{def.kappa}, and
\begin{equation} \label{eq.limit_finite}
\lim_{t\to\infty} \mathbb{E}[V^2(t)] \quad\text{ exists and is
finite},
\end{equation}
then $ K \in L^2(0,\infty)$ and \eqref{eq.kappaintsmall} holds.
Therefore, by this remark and Proposition~\ref{prop.varVconst}, we
see that $ K $ obeying $ K \in L^2(0,\infty)$ and
\eqref{eq.kappaintsmall} is equivalent to \eqref{eq.limit_finite},
and that both imply that the limit is equal to
$\sigma^2/(1-\beta^2\int_0^\infty  K ^2(s)\,ds)$.
\end{remark}

In our next result, we show that the conditions imposed on $ K $
in Proposition~\ref{prop.varVconst}
 are necessary and sufficient for $V$ to be asymptotically stationary. Moreover, we determine a formula
 for the limiting autocovariance function of $V$.
\begin{theorem} \label{thm.Vtns}
Suppose that $\kappa$ obeys \eqref{eq.khypot}. Then the following
statements are equivalent.
\begin{itemize}
\item[(A)] $ K \in L^2(0,\infty)$ and $ K $ obeys  \eqref{eq.kappaintsmall};
\item[(B)] The process $V$ defined by \eqref{def.V} is asymptotically stationary.
\end{itemize}
Moreover, both imply that the function
$\gamma:[0,\infty)\to\mathbb{R}$ given by
\begin{equation} \label{def.gamma}
\gamma(\Delta)= \beta^2  \frac{\sigma^2}{1-\beta^2\int_0^\infty
 K ^2(s)\,ds}\cdot \int_0^\infty  K (s) K (s+\Delta)\,ds,
\quad \Delta\geq 0,
\end{equation}
is well--defined, and that $ \mathbb{E}[V(t)]=\sigma$, for all
$t\geq 0$,
\begin{equation}\lim_{t\to\infty} \text{Cov}(V(t),V(t+\Delta))
=\gamma(\Delta), \quad \text{for all $\Delta\geq
0$}.\label{eq.cov_limit}\end{equation}
\end{theorem}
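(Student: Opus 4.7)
\emph{Proof plan.}
The implication (B) $\Rightarrow$ (A) has already been established in Lemma~\ref{lemma.necctns}, so the main task is to prove (A) $\Rightarrow$ (B) and to derive the explicit form \eqref{def.gamma}. Under (A), Proposition~\ref{prop.varVconst} together with Remark~\ref{rem.msqconverge} show that $M(t):=\mathbb{E}[V^2(t)]$ is continuous on $[0,\infty)$ and
\[
\lim_{t\to\infty} M(t)=a:=\frac{\sigma^2}{1-\beta^2\int_0^\infty K^2(s)\,ds}.
\]
In particular $M$ is bounded on $[0,\infty)$ by some constant $C$. Since $\mathbb{E}[V(t)]=\sigma$ for every $t\geq 0$, the mean of $V$ already has the limit $\theta=\sigma$, and only the covariance requires further work.

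My strategy for the covariance is to use the linear stochastic integral equation \eqref{eq.Vinteqn} to rewrite both $V(t)-\sigma$ and $V(t+\Delta)-\sigma$ as stochastic integrals on the common interval $[0,t+\Delta]$:
\begin{align*}
V(t)-\sigma &=\beta\int_0^{t+\Delta} K(t-s)V(s)\I_{[0,t]}(s)\,dB(s),\\
V(t+\Delta)-\sigma &=\beta\int_0^{t+\Delta} K(t+\Delta-s)V(s)\,dB(s).
\end{align*}
Both integrands are square--integrable on $\Omega\times[0,t+\Delta]$, because $V\in\mathcal{H}^*$ and $K$ is bounded (as seen from \eqref{kappa_concrete} and \eqref{eq.khypot}). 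An application of It\^o's isometry, combined with the identity $\Cov(V(t),V(t+\Delta))=\mathbb{E}[(V(t)-\sigma)(V(t+\Delta)-\sigma)]$, then yields
\[
\Cov(V(t),V(t+\Delta))=\beta^2\int_0^t K(t-s)K(t+\Delta-s)M(s)\,ds,
\]
since the product of the two integrands is supported on $[0,t]$. The substitution $u=t-s$ turns this into the convolution--type expression
\[
\Cov(V(t),V(t+\Delta))=\beta^2\int_0^t K(u)K(u+\Delta)M(t-u)\,du.
\]

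The last step passes to the limit $t\to\infty$ by dominated convergence. For each fixed $u\geq 0$, $M(t-u)\to a$, so the integrand converges pointwise to $a\,K(u)K(u+\Delta)$; a dominating function is $C\abs{K(u)K(u+\Delta)}\leq \tfrac{C}{2}(K(u)^2+K(u+\Delta)^2)$, which lies in $L^1(0,\infty)$ because $K\in L^2(0,\infty)$, and the same Cauchy--Schwarz bound $\int_0^\infty\abs{K(u)K(u+\Delta)}\,du\leq\norm{K}_{L^2}^2$ shows that $\gamma$ itself is well--defined. This delivers
\[
\lim_{t\to\infty}\Cov(V(t),V(t+\Delta))=\beta^2 a\int_0^\infty K(u)K(u+\Delta)\,du=\gamma(\Delta),
\]
as required. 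The only real obstacle in carrying out this plan is the careful application of It\^o's isometry to two stochastic integrals whose upper limits differ by $\Delta$; the common--interval rewriting above handles this issue cleanly, after which everything reduces to standard Lebesgue integration theory.
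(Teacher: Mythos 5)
Your proposal is correct and follows essentially the same route as the paper: $(B)\Rightarrow(A)$ via Lemma~\ref{lemma.necctns}, then the integral representation \eqref{eq.Vinteqn} plus It\^o's isometry to get $\Cov(V(t),V(t+\Delta))=\beta^2\int_0^t K(t-s)K(t+\Delta-s)\mathbb{E}[V^2(s)]\,ds$, the bound $\abs{K(u)K(u+\Delta)}\le\tfrac12(K(u)^2+K(u+\Delta)^2)$ for integrability, and Proposition~\ref{prop.varVconst} to pass to the limit. Your explicit common-interval rewriting of the two stochastic integrals and the dominated-convergence step merely spell out details the paper leaves implicit; no substantive difference.
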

\begin{proof}
In Lemma~\ref{lemma.necctns}, we have shown that statement (B)
implies statement (A). Suppose statement (A) holds. Let $t\geq 0$.
Since $V$ obeys \eqref{eq.Vinteqn}, we have
\[
V(t) =\mathbb{E}[V(t)]+ \beta\int_0^t  K (t-s)V(s)\,dB(s).
\]
Therefore it follows with $V\in \mathcal{H}^*$ and $ K \in
L^2(0,\infty)$ that
\begin{align*}
\text{Cov}(V(t),V(t&+\Delta)) \\
 &=\mathbb{E}\left[\beta\int_0^t
 K (t-s)V(s)\,dB(s)\cdot \beta\int_0^{t+\Delta}
 K (t+\Delta-s)V(s)\,dB(s)  \right].
 \end{align*}
  Since $\Delta\geq 0$, for
each $t\geq 0$ fixed we have
\[
\text{Cov}(V(t),V(t+\Delta)) =\beta^2 \int_0^t
 K (t-s) K (t+\Delta-s) \mathbb{E}[V^2(s)]\,ds.
\]
For $\tau\geq 0$ define
$ K _\Delta(\tau)= K (\tau) K (\tau+\Delta)$. Then
\begin{equation} \label{eq.covVttdel}
\text{Cov}(V(t),V(t+\Delta)) =\beta^2 \int_0^t
 K _\Delta(t-s)\mathbb{E}[V^2(s)]\,ds.
\end{equation}
Since $2|xy|\leq x^2+y^2$ for all $x,y\in\mathbb{R}$, we have $
0\leq | K _\Delta(\tau)|\leq
1/2 K ^2(\tau)+1/2 K ^2(\tau+\Delta). $ Since $ K \in
L^2(0,\infty)$, it follows that $ K _\Delta\in
L^1([0,\infty);[0,\infty))$. By Proposition~\ref{prop.varVconst}, we
have that $t\mapsto \mathbb{E}[V^2(t)]$ obeys \eqref{eq.msqlimit}.

Therefore it follows that
\[
\lim_{t\to\infty}\int_0^t  K _\Delta(t-s)\mathbb{E}[V^2(s)]\,ds =
\int_0^\infty  K _\Delta(s)\,ds \cdot \lim_{t\to\infty}
\mathbb{E}[V^2(t)],
\]
which, by \eqref{eq.covVttdel} and \eqref{def.gamma} implies
\eqref{eq.cov_limit}.

Therefore we have that there is a function $\gamma$, defined by
\eqref{def.gamma}, such that \eqref{eq.cov_limit} holds true.
Furthermore we have that $\mathbb{E}[V(t)]=\sigma$. Thus $V$ is
asymptotically stationary, which proves (B). Hence (A) and (B) are
equivalent. Moreover, we have shown that $\mathbb{E}[V(t)]$ and
$\text{Cov}(V(t),V(t+\Delta))$ have the desired properties.
\end{proof}

In our next result we show that $V$ has short memory or long memory
 according as to whether $ K $ is integrable or not.
\begin{theorem} \label{thm.longmemshortmem}
Suppose that $\kappa$ obeys \eqref{eq.khypot} and that $ K $
satisfies $ K \in L^2(0,\infty)$ and obeys
\eqref{eq.kappaintsmall}.
\begin{itemize}
\item[(a)] If $\kappa $ obeys
$\int_0^\infty s|\kappa|(\, ds)<\infty$, then $\gamma$ defined by
\eqref{def.gamma} obeys $ \int_0^\infty
|\gamma(\Delta)|\,d\Delta<+\infty.$
\item[(b)]
If $\kappa$ obeys $\int_0^\infty s|\kappa|(\, ds) =+\infty$, and $\kappa$ is
a non--negative measure, then $\gamma$ defined by \eqref{def.gamma}
obeys $
 \int_0^\infty |\gamma(\Delta)|\,d\Delta=+\infty.
$\end{itemize}
\end{theorem}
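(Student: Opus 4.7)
The argument splits naturally along the two parts. For part (a) the plan is to show that $K\in L^1(0,\infty)$ and then conclude by a Fubini estimate. From the explicit representation \eqref{kappa_concrete}, $|K(x)|\le |\kappa|([x,\infty))$ on $[\tau,\infty)$, so by Tonelli $\int_\tau^\infty |K(x)|\,dx\le \int_0^\infty s\,|\kappa|(ds)<\infty$; on $[0,\tau]$, $|K|$ is uniformly bounded by $\|\lambda\|+\|\kappa\|$. Hence $K\in L^1(0,\infty)$, and substituting into \eqref{def.gamma} and switching the order of integration gives $\int_0^\infty |\gamma(\Delta)|\,d\Delta \le C\|K\|_{L^1}^2<\infty$, where $C$ is the positive constant appearing in \eqref{def.gamma}.

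Part (b) is the main work; the plan is to show $\int_0^T\gamma(\Delta)\,d\Delta\to +\infty$ as $T\to\infty$, which forces $\gamma\notin L^1(0,\infty)$. The central decomposition is $K(s)=K_0(s)+L(s)$, where $K_0(s):=\int_s^\infty\kappa(du)\ge 0$ is non-increasing (using $\kappa\ge 0$) and $L(s):=-\int_{-\tau}^{-s}\lambda(du)\,\I_{[0,\tau)}(s)$ is bounded with support in $[0,\tau]$. Tonelli gives $\int_0^\infty K_0(u)\,du=\int_0^\infty s\,\kappa(ds)=+\infty$. Since $K\in L^2(0,\infty)$, Cauchy--Schwarz bounds $\int_0^T\!\!\int_0^\infty |K(s)K(s+\Delta)|\,ds\,d\Delta$ by $T\|K\|_{L^2}^2$, so Fubini justifies
\[
\int_0^T \gamma(\Delta)/C\,d\Delta = \int_0^\infty K(s)\,J_T(s)\,ds,\qquad J_T(s):=\int_s^{s+T} K(u)\,du.
\]
For $T\ge\tau$, $J_T(s)=J_T^0(s)+\tilde M(s)$ with $J_T^0(s):=\int_s^{s+T}K_0(u)\,du\ge 0$ and $\tilde M(s):=\int_s^\tau L(u)\,du$ supported in $[0,\tau]$ and bounded by $\|L\|_{L^1}$, so $\int_0^\infty K(s)\tilde M(s)\,ds$ is a fixed finite constant.

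The remaining term $\int_0^\infty K(s)J_T^0(s)\,ds=\int_0^\infty K_0(s)J_T^0(s)\,ds+\int_0^\tau L(s)J_T^0(s)\,ds$ is where the analysis bites. Writing $F(T):=\int_0^T K_0(u)\,du\uparrow+\infty$, the key lower bound is $\int_0^\infty K_0(s)J_T^0(s)\,ds\ge \tfrac12 F(T)^2$, obtained from a Tonelli switch in $(s,u=s+\Delta)$ together with $\int_0^T F'(u)F(u)\,du=\tfrac12 F(T)^2$. The cross-term is controlled by the monotonicity of $K_0$, which gives $J_T^0(s)\le J_T^0(0)=F(T)$ for every $s\ge 0$, hence $|\int_0^\tau L(s)J_T^0(s)\,ds| \le \|L\|_{L^1}F(T)$. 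Collecting everything,
\[
\int_0^T \gamma(\Delta)/C\,d\Delta\ \ge\ \tfrac12 F(T)^2 - \|L\|_{L^1}F(T) - \text{const},
\]
which tends to $+\infty$ as $F(T)\to +\infty$, forcing $\gamma\notin L^1(0,\infty)$.

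The principal obstacle, and the reason a one-line Fubini or Plancherel argument fails, is that $K$ is not assumed to be non-negative: because $\lambda$ can be signed, the naive attempt to swap the order of integration in $\int\!\!\int K(s)K(s+\Delta)\,ds\,d\Delta$ produces an $\infty-\infty$ expression. The decisive point is the rate-of-growth comparison in the final display: the non-negative tail piece $K_0$ contributes quadratically in $F(T)$, whereas the compactly-supported signed piece $L$ can only contribute linearly, so the former eventually dominates regardless of the sign of $\int_0^\tau L(s)\,ds$.
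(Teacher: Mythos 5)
Your proof is correct, and while part (a) coincides with the paper's argument (show $K\in L^1(0,\infty)$ from the first--moment condition, then bound $\int_0^\infty|\gamma(\Delta)|\,d\Delta$ by a constant times $\|K\|_{L^1}^2$ after a Tonelli switch), part (b) takes a genuinely different route. The paper uses that $K$ is non--negative and non--increasing on $[\tau,\infty)$, sets $f(\Delta)=\int_{\lceil\tau\rceil}^\infty K(s)K(s+\Delta)\,ds$, reduces integrability of the monotone function $f$ to summability of $(f(n))_{n}$, and shows the resulting double sum diverges by rearranging it into $\sum_{l}K(l)\sum_{j\le l}K(j)$ and using $\sum_j K(j)=\infty$; the contribution of the region $s\in[0,\lceil\tau\rceil)$, where $K$ may change sign, is dismissed with the phrase ``$f$ (and therefore $\gamma$)'' rather than estimated. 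You instead work with the partial integrals $\int_0^T\gamma(\Delta)\,d\Delta$ directly, split $K=K_0+L$ into its non--negative monotone tail and a bounded, compactly supported signed remainder, and compare growth rates: the $K_0$--contribution is at least $\tfrac12 F(T)^2$ while every term involving $L$ is $O(F(T))$ (via $J_T^0(s)\le F(T)$) or bounded, so the partial integrals diverge. This buys a quantitative lower bound and an explicit treatment of the signed portion of $K$ on $[0,\tau)$ --- precisely the point the published proof glosses over --- at the cost of being somewhat longer; the paper's discretization is shorter and mirrors the discrete ARCH$(\infty)$ analysis of its final section. Your supporting steps (the bound $T\|K\|_{L^2}^2$ justifying Fubini on $[0,T]\times[0,\infty)$, the finiteness of each piece after splitting, and the identity $\int_0^T F'F=\tfrac12 F(T)^2$) all check out.
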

\begin{remark}
In the case that $\kappa$ obeys $\int_0^\infty s|\kappa|(\, ds)<\infty$ it
follows that $ K \in L^1(0,\infty)\cap C([0,\infty);\mathbb{R})$
and that $ K (t)\to 0$ as $t\to\infty$. Therefore we have
automatically that $ K \in L^2(0,\infty)$.
\end{remark}
\begin{remark}
Part (b) of the theorem still holds in the case when $\kappa$ is a
non--positive measure, by an almost identical argument. One
implication of this fact is that the sensitivity parameter $\beta$
can be negative in \eqref{eq.Xintro}. In terms of modelling,
therefore, it is the \emph{magnitude} of the difference between the
short and long run indicators that matters, rather than the
difference itself.
\end{remark}
\begin{proof}
If
$\int_0^\infty s|\kappa|(\, ds)<\infty$, then $ K \in
L^1(0,\infty)$:
\begin{align}\begin{split}
\int_0^\infty | K (x)|\, dx
&\le \int_0^\infty s|\kappa|(\, ds)+\int_{-\tau}^0 (-s)|\lambda|(\, ds)\\
&\le  \int_0^\infty s|\kappa|(\, ds) +\tau \|\lambda\|.
\end{split}\label{eq.kappa_integrable}
\end{align}
Set $c:=\beta^2  \sigma^2/(1-\beta^2\int_0^\infty  K ^2(s)\,ds)$.
By the definition of $\gamma$, we have
\begin{align*}
\int_0^\infty |\gamma(\Delta)|\,d\Delta &=c\int_0^\infty
\left|\int_0^\infty  K (s) K (s+\Delta)\,ds\right|\,d\Delta \\
&=
c\int_0^\infty | K (s)| \int_s^\infty | K (u)|\,du\,ds.
\end{align*}
Since $ K \in L^1(0,\infty)$, it follows that
\[
\int_0^\infty |\gamma(\Delta)|\,d\Delta\leq c\int_0^\infty
| K (s)| \int_0^\infty | K (u)|\,du\,ds= c\left(\int_0^\infty
| K (s)|\,ds\right)^2,
\]
so $\gamma$ is in $L^1(0,\infty)$.

To prove the second part, since $\kappa$ is non--negative, it follows
that $ K (t)$ is non--increasing and non--negative for $t\ge
\tau$. Let $f$ be defined by $f(\Delta):=\int_{\lceil \tau
\rceil}^\infty  K (s) K (s+\Delta)\,ds$, where $\lceil
x\rceil$ denotes the smallest integer not less than $x$. Then $f$ is
non--negative and non--increasing. Hence $f$ (and therefore
$\gamma$) is not integrable if and only if
\[
\sum_{n=0}^\infty f(n) = \sum_{n=0}^\infty \int_{\lceil \tau
\rceil}^\infty  K (s) K (s+n)\,ds
\]
is infinite. Since $ K $ is non--negative for $t\ge \tau$, we
have
\begin{equation}
\sum_{n=0}^\infty \int_{\lceil \tau \rceil}^\infty
 K (s) K (s+n)\,ds =\sum_{n=0}^\infty \sum_{j=\lceil \tau
\rceil}^\infty \int_{j}^{j+1}  K (s) K (s+n)\,ds.
\end{equation}
Following the steps  of \eqref{eq.kappa_integrable}, we see that
$ K $ is  not integrable. Since  it is non--negative and
non--increasing for $t\ge \tau$, we have that $\sum_{j=\lceil \tau
\rceil}^\infty  K (j)=\infty$. Therefore
\begin{align*}
\sum_{n=0}^\infty \int_{\lceil \tau \rceil}^\infty
 K (s) K (s+n)\,ds
&=\sum_{n=0}^\infty \sum_{j=\lceil \tau \rceil}^\infty \int_{j}^{j+1}  K (s) K (s+n)\,ds\\
&\geq \sum_{n=0}^\infty \sum_{j=\lceil \tau \rceil}^\infty   K (j+1) K (j+n+1)\\
&= \sum_{j=\lceil \tau \rceil+1}^\infty  K (j) \sum_{l=j}^\infty
 K (l) = \sum_{l=\lceil \tau \rceil+1}^\infty  K (l)
\sum_{j=\lceil \tau \rceil+1}^l  K (j).
\end{align*}
Since $\sum_{j=\lceil \tau \rceil}^\infty  K (j)=\infty$ there is
an $N>0$ such that $\sum_{j=\lceil \tau \rceil+1}^l  K (j)\geq 1$
for all $l\geq N$. Therefore as $ K $ is non--negative for $t\ge
\tau$, we have
\[
\sum_{n=0}^\infty \int_{\lceil \tau \rceil}^\infty
 K (s) K (s+n)\,ds \geq \sum_{l=\lceil \tau \rceil+1}^\infty
 K (l)  \sum_{j=\lceil \tau \rceil+1}^l  K (j) \geq
\sum_{l=N}^\infty  K (l)=+\infty.
\]
Hence $ \sum_{n=0}^\infty f(n)=+\infty, $ as required.
\end{proof}
\section{Exact convergence rates for regularly varying weight--functions}
In the previous section, we gave conditions under which $\gamma$ is
either integrable or non--intergrable, but did not establish the
pointwise rate of decay of $\gamma(\Delta)$ as $\Delta\to\infty$.

In this section, we address this question. First, consider measures
$\kappa$ and $\lambda$ satisfying
\[ \kappa(\, ds)=k(s)\,ds, \quad \lambda(\, ds)=\lambda_0\I_{\{0\}} (\, ds), \]
where $k$ is a continuous integrable kernel, $k\in L^1(0,
\infty)\cap C((0, \infty); (0, \infty))$, and $\lambda_0=\int_0^\infty
k(s)\, ds.$ In this case equation \eqref{eq.Xintro} reads
\begin{align} \label{eq.X_simple}
\begin{split}
dX(t)&=\left(\sigma + \beta\left(\lambda_0 X(t) -\int_0^t
X(t-s)k(s)\,ds\right)\right)\,dB(t), \quad t\geq 0;\\  X(t)&=0,
\quad t\in [-\tau, 0].
\end{split}
\end{align}
We consider kernels of the form $k(t)\sim L(t)t^{-\alpha-1}$ as
$t\to\infty$, for certain positive $\alpha$, where $L$ is a slowly
varying function.
\begin{corollary} \label{th.concrete}
Suppose that $k\in \mbox{RV}_\infty(-1-\alpha)$ with $1/2<\alpha<1$
and $ K $ satisfies \eqref{eq.kappaintsmall}. Then, the function
$\gamma$ defined by \eqref{def.gamma} is not integrable and
satisfies
\begin{equation} \label{eq.exact_rate}
\gamma(\Delta)\sim\beta^2  \frac{\sigma^2}{1-\beta^2\int_0^\infty
 K ^2(s)\,ds}\cdot
\frac{\Gamma(2\alpha-1)\Gamma(1-\alpha)}{\Gamma(\alpha+1)\alpha}\Delta^{1-2\alpha}L(\Delta)^2, \quad \Delta\to \infty.\end{equation}
\end{corollary}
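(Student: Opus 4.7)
The plan is to use Karamata's theorem to identify the tail of $ K $, rescale the convolution integral that defines $\gamma$, and justify passage to the limit by Potter's bounds and dominated convergence. With $\lambda(ds)=\lambda_0\I_{\{0\}}(ds)$ and $\kappa(ds)=k(s)\,ds$, the definition \eqref{def.kappa} reduces for $x>0$ to $ K (x)=\int_x^\infty k(s)\,ds$. Karamata's theorem applied to $k\in\mbox{RV}_\infty(-1-\alpha)$ then yields
\[
 K (x)\sim \frac{1}{\alpha}L(x)x^{-\alpha}, \qquad x\to\infty,
\]
so $ K \in\mbox{RV}_\infty(-\alpha)$; in particular $ K \in L^2(0,\infty)$ since $\alpha>1/2$, so Theorem~\ref{thm.Vtns} is applicable.

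By Theorem~\ref{thm.Vtns}, writing $c:=\beta^2\sigma^2/(1-\beta^2\int_0^\infty  K ^2(s)\,ds)$, the substitution $s=\Delta u$ in the formula for $\gamma$ gives
\[
\gamma(\Delta) = c\,\Delta\int_0^\infty  K (\Delta u) K (\Delta(1+u))\,du.
\]
For each fixed $u>0$, $ K (\Delta u)/ K (\Delta)\to u^{-\alpha}$ and $ K (\Delta(1+u))/ K (\Delta)\to (1+u)^{-\alpha}$, so after normalisation by $ K (\Delta)^2$ the integrand converges pointwise to $u^{-\alpha}(1+u)^{-\alpha}$.

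The technical heart is swapping limit and integral. I would split the $u$-range at $u_\Delta:=\Delta_0/\Delta$, where $\Delta_0$ is chosen so that Potter's bound $ K (\Delta u)/ K (\Delta)\le C\max(u^{-\alpha+\delta},u^{-\alpha-\delta})$ is valid for $\Delta u,\Delta\ge\Delta_0$. On $(u_\Delta,\infty)$ Potter applies to both factors (the max for $ K (\Delta(1+u))/ K (\Delta)$ collapsing to $(1+u)^{-\alpha+\delta}$ since $1+u\ge 1$), and choosing $\delta\in(0,\min(1-\alpha,\alpha-1/2))$ -- a choice that is available precisely because $1/2<\alpha<1$ -- renders the dominator $\max(u^{-\alpha+\delta},u^{-\alpha-\delta})(1+u)^{-\alpha+\delta}$ integrable on $(0,\infty)$. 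On the shrinking window $(0,u_\Delta)$ the trivial bound $ K \le\lambda_0$ controls the contribution by a constant multiple of $\Delta^{\alpha-1}/L(\Delta)\to 0$. Dominated convergence then yields
\[
\lim_{\Delta\to\infty}\frac{\gamma(\Delta)}{c\,\Delta\, K (\Delta)^2}=\int_0^\infty u^{-\alpha}(1+u)^{-\alpha}\,du.
\]

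The substitution $v=1/(1+u)$ identifies this remaining integral with the Beta function $B(2\alpha-1,1-\alpha)=\Gamma(2\alpha-1)\Gamma(1-\alpha)/\Gamma(\alpha)$, which is finite exactly because $1/2<\alpha<1$. Combining this with $ K (\Delta)^2\sim L(\Delta)^2\Delta^{-2\alpha}/\alpha^2$ and the identity $\Gamma(\alpha+1)=\alpha\Gamma(\alpha)$ reshapes the constant into $\Gamma(2\alpha-1)\Gamma(1-\alpha)/[\alpha\Gamma(\alpha+1)]$, matching \eqref{eq.exact_rate}. Non-integrability of $\gamma$ is immediate from $1-2\alpha\in(-1,0)$, or alternatively from Theorem~\ref{thm.longmemshortmem}(b), since $sk(s)\in\mbox{RV}_\infty(-\alpha)$ with $\alpha<1$ fails to be integrable at infinity. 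The principal obstacle is the dominated convergence step: obtaining an $L^1$ dominator through Potter's inequality requires both tails of the hypothesis $1/2<\alpha<1$, and the small-$u$ regime where $\Delta u$ has not yet entered the Karamata range must be handled by the separate, vanishing estimate above.
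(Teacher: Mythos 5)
Your proposal is correct and follows essentially the same route as the paper: the paper also applies Karamata's theorem to get $ K (x)\sim \alpha^{-1}L(x)x^{-\alpha}$ and then reduces everything to the limit $\gamma(\Delta)/(\Delta  K (\Delta)^2)\to c\,\Gamma(2\alpha-1)\Gamma(1-\alpha)/\Gamma(\alpha)$, which it obtains by citing (``following the steps of'') Theorem 7.1 of Appleby--Krol. Your rescaling $s=\Delta u$, the Potter-bound/dominated-convergence argument with $\delta<\min(1-\alpha,\alpha-\tfrac12)$, the vanishing boundary window near $u=0$, and the Beta-function evaluation $B(1-\alpha,2\alpha-1)=\Gamma(2\alpha-1)\Gamma(1-\alpha)/\Gamma(\alpha)$ are precisely a self-contained proof of that cited limit, and your constant bookkeeping via $\Gamma(\alpha+1)=\alpha\Gamma(\alpha)$ matches \eqref{eq.exact_rate}.
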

Since $-1<1-2\alpha<0$, the function $\gamma$ is obviously
non--integrable.
\begin{proof}
By Karamata's Theorem (see e.g.~\cite[Theorem 1.5.11]{Goldie}), the
function $ K $, defined as in \eqref{def.kappa} satisfies
\[  K (x)=\int_x^\infty k(s)\, ds\sim \frac{1}{\alpha} L(x)x^{-\alpha}, \quad x\to \infty.\]
Since $k$ is non--negative, $ K $ is also non--negative and
non--increasing. Hence, following the steps of the proof of Theorem
7.1 in \cite{Ap_Kr} we obtain
\begin{equation}\label{eq.Ap_Kr}\ \lim_{\Delta\to \infty} \frac{\gamma(\Delta)}{\Delta K (\Delta)^2}
=\beta^2  \frac{\sigma^2}{1-\beta^2\int_0^\infty  K ^2(s)\,ds}\frac{\Gamma(2\alpha-1)\Gamma(1-\alpha)}{\Gamma(\alpha)}.\end{equation}
Now, since $\Delta  K (\Delta)^2 \sim 1/\alpha^2 L(\Delta)^2
\Delta^{1-2\alpha}$ for $\Delta \to \infty$, \eqref{eq.exact_rate} follows immediately from
\eqref{eq.Ap_Kr} and \eqref{def.gamma}.
\end{proof}
\begin{example}
Let $k(t):=1/(1+t)^{1+\alpha}$ for  $1/2<\alpha<1.$ Then, \[
 K (x)=\frac{1}{\alpha}\frac{1}{(1+x)^\alpha} \quad
\mbox{and}\quad \int_0^\infty  K (x)^2\,
dx=\frac{1}{\alpha^2}\frac{1}{2\alpha-1}.\] If moreover
$\beta^2<\alpha^2(2\alpha-1)$ holds true, then assumptions of
Corollary \ref{th.concrete} are satisfied and we obtain
\[\gamma(\Delta)\sim \frac{\beta^2\sigma^2}{\alpha^2(2\alpha-1)-\beta^2}
\frac{\Gamma(2\alpha)\Gamma(1-\alpha)}{\Gamma(\alpha)}\Delta^{1-2\alpha}, \quad \Delta \to \infty. \]
\end{example}
We can also determine the rate of decay of $\gamma$ when $\alpha>1$.
\begin{corollary} \label{th.concreteL1}
Suppose that $k\in \mbox{RV}_\infty(-1-\alpha)$ with $\alpha>1$ and
$ K $ satisfies \eqref{eq.kappaintsmall}. Then, the function
$\gamma$ defined by \eqref{def.gamma} is integrable and satisfies
\begin{equation} \label{eq.exact_rateL1}
\gamma(\Delta)\sim\beta^2  \frac{\sigma^2}{1-\beta^2\int_0^\infty
 K ^2(s)\,ds} \int_0^\infty  K (s)\,ds \cdot
\frac{1}{\alpha}L(\Delta)\Delta^{-\alpha}, \quad \Delta \to \infty.
\end{equation}
\end{corollary}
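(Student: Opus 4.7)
The plan is to reduce \eqref{eq.exact_rateL1} to an asymptotic statement about the integral $\int_0^\infty K(s)K(s+\Delta)\,ds$ that defines $\gamma$ in \eqref{def.gamma}. Since $k\in \mbox{RV}_\infty(-1-\alpha)$ with $\alpha>1$, Karamata's theorem gives the tail asymptotic
\[
K(x)=\int_x^\infty k(s)\,ds \sim \frac{1}{\alpha} L(x) x^{-\alpha}, \quad x\to \infty,
\]
so $K\in \mbox{RV}_\infty(-\alpha)$ and, because $\alpha>1$, $K\in L^1(0,\infty)$. Substituting $K(\Delta)\sim \frac{1}{\alpha}L(\Delta)\Delta^{-\alpha}$ into \eqref{def.gamma}, it therefore suffices to prove
\[
\int_0^\infty K(s)K(s+\Delta)\,ds \sim K(\Delta)\int_0^\infty K(s)\,ds, \quad \Delta\to\infty.
\]

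For this the natural tool is dominated convergence, applied after dividing through by $K(\Delta)$:
\[
\frac{1}{K(\Delta)}\int_0^\infty K(s)K(s+\Delta)\,ds = \int_0^\infty K(s)\,\frac{K(s+\Delta)}{K(\Delta)}\,ds.
\]
Since $k$ is positive on $(0,\infty)$ by the standing assumption of the section, the tail $K$ is non-negative and non-increasing, which gives the uniform bound $K(s+\Delta)/K(\Delta)\le 1$ for every $s,\Delta\ge 0$; the $L^1$ function $K(s)$ thus dominates the integrand uniformly in $\Delta$. For fixed $s\ge 0$, the identity
\[
\frac{K(s+\Delta)}{K(\Delta)} = \left(1+\frac{s}{\Delta}\right)^{-\alpha}\frac{L(s+\Delta)}{L(\Delta)}(1+o(1)),
\]
together with the fact that $L(s+\Delta)/L(\Delta)\to 1$ as $\Delta\to\infty$ (the Uniform Convergence Theorem for slowly varying functions implies $L(y)/L(x)\to 1$ whenever $y/x\to 1$ and $x\to\infty$), shows that $K(s+\Delta)/K(\Delta)\to 1$. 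Dominated convergence then yields the required limit $\int_0^\infty K(s)\,ds$, and the asymptotic \eqref{eq.exact_rateL1} follows.

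Integrability of $\gamma$ is immediate from this asymptotic: $\gamma$ is continuous as the limiting autocovariance of the asymptotically stationary $V$ delivered by Theorem~\ref{thm.Vtns}, hence bounded on bounded intervals, and it decays like a constant multiple of $L(\Delta)\Delta^{-\alpha}$ at infinity, which is integrable since $\alpha>1$. The only genuine subtlety in the plan is the justification of dominated convergence, which rests on the monotonicity of $K$; this monotonicity is supplied by the positivity of $k$ imposed at the start of the section. Were $k$ allowed to change sign, the same strategy could still be pushed through by replacing the monotonicity bound with a Potter-type bound providing a dominating constant multiple of $K(s)$ valid for all $\Delta$ sufficiently large.
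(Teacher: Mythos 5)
Your proof is correct, but it takes a genuinely different route from the paper: the paper's entire argument is the observation that $\alpha>1$ makes the claimed rate integrable, plus a citation of Theorem 5.2 of Appleby and Krol (2011, \emph{J.\ Math.\ Anal.\ Appl.}) for the asymptotic \eqref{eq.exact_rateL1}, whereas you supply a self-contained proof. Your reduction to showing $\int_0^\infty K(s)K(s+\Delta)\,ds \sim K(\Delta)\int_0^\infty K(s)\,ds$, followed by Karamata's theorem, the bound $K(s+\Delta)/K(\Delta)\le 1$ from the monotonicity of the tail of the positive kernel $k$, the Uniform Convergence Theorem for the pointwise limit, and dominated convergence, is sound; this is essentially the standard argument behind the cited external theorem in the special case of a monotone regularly varying kernel. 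What your version buys is transparency and independence from the reference; what the paper's citation buys is brevity and access to a more general statement (the cited result does not require the eventual monotonicity you lean on, which is why you correctly flag that a Potter-bound substitute would be needed for sign-changing $k$). Two minor points: the integrability of $\gamma$ could be obtained even more directly from Theorem~\ref{thm.longmemshortmem}(a), since $\int_0^\infty s\,k(s)\,ds<\infty$ when $\alpha>1$; and local boundedness of $\gamma$ follows immediately from Cauchy--Schwarz applied to \eqref{def.gamma} together with $K\in L^2(0,\infty)$, which is cleaner than invoking continuity of the limiting autocovariance.
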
\begin{proof}
Since $\alpha>1$, it is clear that $\gamma$ is integrable. The proof
of the asymptotic behaviour of $\gamma$ follows from Theorem 5.2 in
\cite{Ap_Kr}.\end{proof}Whereas in the case $\alpha<1/2$ $ K $ does not satisfy $ K \in L^2(0, \infty)$,
the case $\alpha=1/2$ turns out to be critical: depending on the properties of the slowly varying
function $L$, both $ K \not \in L^2(0, \infty)$ as well
as $ K \in L^2(0, \infty)$ is possible.  In the latter
case, we can achieve arbitrary slow decay rates of the
autocovariance function.
\begin{corollary}\begin{enumerate}\item
Suppose that $k(t)=L(t)t^{-3/2}$, $t\ge
0$, with a slowly varying function $L$. Then, $ K \in L^2(0,
\infty)$ if and only if
\begin{equation}\label{eq.limit_finiteL}
\int_1^\infty \frac{L(t)^2}{t}\, dt<\infty.\end{equation} Moreover,
if \eqref{eq.limit_finiteL} holds true and $ K $ additionally satisfies \eqref{eq.kappaintsmall}, then
\[ \gamma(\Delta)\sim\frac{4\sigma^2}{1-\beta^2\int_0^\infty
 K ^2(s)\,ds} \int_\Delta^\infty \frac{L(s)^2}{s}\, ds, \quad \Delta\to \infty.\]
\item
Suppose that $f$ is in $C^1((0,\infty);(0,\infty))$,
$f(t)\to 0$ as $t\to\infty$ and that $-f'\in
\text{RV}_\infty(-1)$. Then $f\in \text{RV}_\infty(0)$ and
there exists $L\in \text{RV}_\infty(0)$ which satisfies
\eqref{eq.limit_finiteL} and
\begin{equation} \label{eq.constructLgammaasy}
\int_\Delta^\infty \frac{L(s)^2}{s}\,ds \sim f(\Delta), \quad\text{as
$\Delta\to\infty$}.
\end{equation}
\end{enumerate}
\end{corollary}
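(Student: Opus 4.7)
For part (1), my plan is first to apply Karamata's theorem (as in the proof of Corollary~\ref{th.concrete}) to obtain $ K (x) = \int_x^\infty k(s)\,ds \sim 2 L(x) x^{-1/2}$ as $x\to\infty$, so that $ K (x)^2 \sim 4 L(x)^2/x$. The equivalence $ K  \in L^2(0,\infty) \Leftrightarrow \int_1^\infty L(t)^2/t\,dt<\infty$ follows at once by tail comparison, and whenever it holds, $\int_\Delta^\infty  K ^2(s)\,ds \sim 4\, g(\Delta)$, where $g(\Delta):=\int_\Delta^\infty L(s)^2/s\,ds$.

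To obtain the asymptotic formula for $\gamma$ I would show
\begin{equation*}
\int_0^\infty  K (s) K (s+\Delta)\,ds \sim \int_\Delta^\infty  K ^2(s)\,ds,\quad \Delta\to\infty,
\end{equation*}
and split the integral at $\Delta$. For $s>\Delta$, the strict monotonicity of $ K $ gives the sandwich
\[
\int_{2\Delta}^\infty  K ^2(u)\,du \;=\; \int_\Delta^\infty  K ^2(s+\Delta)\,ds \;\le\; \int_\Delta^\infty  K (s) K (s+\Delta)\,ds \;\le\; \int_\Delta^\infty  K ^2(s)\,ds;
\]
since $g$ is slowly varying one has $g(2\Delta)\sim g(\Delta)$, so both bounds collapse to the same asymptotic $4g(\Delta)$. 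For $s\in(0,\Delta)$, $ K (s+\Delta)\le K (\Delta)$, and Karamata applied to $\int_0^\Delta L(s)s^{-1/2}\,ds \sim 2L(\Delta)\sqrt{\Delta}$ shows the contribution is $O(L(\Delta)^2)$. But for any slowly varying $\ell$ with $\int_1^\infty \ell(s)/s\,ds<\infty$ one has $\int_\Delta^{\Delta e^n}\ell(s)/s\,ds\sim n\,\ell(\Delta)$ for each fixed $n$, hence $\ell(\Delta)=o\bigl(\int_\Delta^\infty \ell(s)/s\,ds\bigr)$; applied with $\ell=L^2$ this shows the small-$s$ piece is negligible. Inserting the resulting equivalence into \eqref{def.gamma} yields the claimed expression.

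For part (2), since $-f'\in\mbox{RV}_\infty(-1)$, write $-f'(t)=h(t)/t$ with $h\in\mbox{RV}_\infty(0)$. Because $f\in C^1((0,\infty);(0,\infty))$ and $f(t)\to 0$, the fundamental theorem of calculus gives
\[
f(\Delta) \;=\; \int_\Delta^\infty (-f'(s))\,ds \;=\; \int_\Delta^\infty \frac{h(s)}{s}\,ds,
\]
which is finite and, by the same tail-asymptotic argument as above, slowly varying; hence $f\in\mbox{RV}_\infty(0)$. Define $L(s):=\sqrt{h(s)}=\sqrt{-s f'(s)}$. Slow variation of $L$ follows from that of $h$ since $x\mapsto\sqrt{x}$ preserves slow variation, and by construction
\[
\int_\Delta^\infty \frac{L(s)^2}{s}\,ds \;=\; \int_\Delta^\infty \frac{h(s)}{s}\,ds \;=\; f(\Delta),
\]
which is stronger than the $\sim f(\Delta)$ required in \eqref{eq.constructLgammaasy}. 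The integrability $\int_1^\infty L(s)^2/s\,ds=f(1)<\infty$ is then automatic.

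The main obstacle is the sandwich argument for $s>\Delta$ in part (1); the key point is that the slow variation of the tail function $g$ forces $g(2\Delta)\sim g(\Delta)$, so the two bounds of the sandwich collapse in the limit and no uniform control of the ratio $ K (s+\Delta)/ K (s)$ over $s/\Delta\in(1,\infty)$ is required. Part (2) is essentially a matter of spotting the ansatz $L=\sqrt{-(\cdot)f'(\cdot)}$, after which everything reduces to the fundamental theorem of calculus.
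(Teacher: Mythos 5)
Your argument is correct, but it is a genuinely different route from the paper's: the paper disposes of both parts by citing Theorem 3.6 and Corollary 3.7 of the companion work \cite{Ap_Kr}, whereas you give a self-contained proof. Your key device for part (1) --- bounding $\int_\Delta^\infty K(s)K(s+\Delta)\,ds$ between $\int_{2\Delta}^\infty K^2(u)\,du$ and $\int_\Delta^\infty K^2(u)\,du$ using only monotonicity of $K$, and then letting the slow variation of the tail function $g(\Delta)=\int_\Delta^\infty L^2(s)s^{-1}\,ds$ collapse the sandwich --- neatly avoids any uniform control of $K(s+\Delta)/K(s)$, and the estimate $L(\Delta)^2=o(g(\Delta))$ (Bingham--Goldie--Teugels, Prop.~1.5.9b, which you reprove via $\int_\Delta^{\Delta e^n}\ell(s)s^{-1}\,ds\sim n\,\ell(\Delta)$) correctly kills the $\int_0^\Delta$ piece. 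Part (2) via the ansatz $L=\sqrt{-(\cdot)f'(\cdot)}$ and the fundamental theorem of calculus is exactly the natural construction and is complete, modulo one cosmetic point: $-tf'(t)$ need only be positive for large $t$, so $L$ should be redefined (e.g.\ as $1$) on an initial interval to be a genuine slowly varying function $[0,\infty)\to(0,\infty)$; this does not affect \eqref{eq.constructLgammaasy} or \eqref{eq.limit_finiteL}. What the citation buys the authors is brevity; what your proof buys is transparency about exactly which properties of $K$ (monotonicity, the Karamata asymptotic $K(x)\sim 2L(x)x^{-1/2}$) and of $g$ (slow variation of convergent tail integrals) are actually used.

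One substantive remark: inserting $\int_0^\infty K(s)K(s+\Delta)\,ds\sim 4\int_\Delta^\infty L^2(s)s^{-1}\,ds$ into \eqref{def.gamma} yields
\[
\gamma(\Delta)\sim \frac{4\beta^2\sigma^2}{1-\beta^2\int_0^\infty K^2(s)\,ds}\int_\Delta^\infty \frac{L(s)^2}{s}\,ds,
\]
i.e.\ your computation carries the prefactor $\beta^2$ that is present in \eqref{def.gamma} and in both neighbouring corollaries but is absent from the constant stated in part (1) of this corollary. You should not claim your formula matches the stated one verbatim; rather, note that the derivation indicates the stated constant is missing a factor $\beta^2$ (a typo in the statement), since your asymptotics for $K$ and for the convolution integral are correct.
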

\begin{proof}
The proof follows from Theorem 3.6 and Corollary 3.7 in \cite{Ap_Kr}.
\end{proof}
\section{Connection with ARCH($\infty$) processes}
Let $(\Omega, \mathcal{F}, \mathbb{P})$ be a probability space
equipped with a filtration $\mathbb{F}=\{\mathcal{F}_{n}:n\in \mathbb{N}\}$.
Let us now consider the discrete version of equation
\eqref{eq.Xintro}. Let $\sigma>0$, $a>0$ and suppose $X=\{ X_n:n\in
\mathbb{N}\}$ satisfies
\[ X_{n+1}-X_n=\Bigl(\sigma+\beta\bigl( aX_n -\sum_{j=1}^n a_{n-j}X_j\bigr)\Bigr)\xi_{n+1}=:V_{n+1}\xi_{n+1}, \quad X_0=0,\]
where $(a_n)_{n\ge 0 }$ is a summable non--negative
sequence, $a=\sum_{j=0}^\infty a_j$ and $\xi=\{\xi_n \,:\, n\in
\mathbb{N}\}$ is a sequence of $\mathbb{F}$--adapted independent, identically distributed
random variables with $\mathbb{E}(\xi_n)=0, \,
\mathbb{E}(\xi_n^2)=1$ for all $n\in \mathbb{N}$. By  $U$ we denote
the uncorrelated process $U_n:=V_{n}\xi_{n}$, $n\ge 1$ and its
conditional variance process satisfies
\[\text{Var}(U_{n+1}| \mathcal{F}_n)=\mathbb{E}(U_{n+1}^2| \mathcal{F}_n)=V_{n+1}^2, \quad n\in \mathbb{N}.\]
Moreover,  using  $X_n=\sum_{j=1}^n X_j-X_{j-1}=\sum_{j=1}^n
V_j\xi_j$, we see that $V_n, \, n\in \mathbb{N}$ is given by
\begin{equation} \label{eq.Varchinfty}
V_{n+1} =\sigma+\beta \sum_{j=1}^n  \Bigl(a-\sum_{i=j}^n
a_{n-i}\Bigr) U_j =\sigma+ \beta\sum_{j=1}^n  K _{n-j} U_j =\sigma+
\beta\sum_{j=1}^n  K _{n-j} V_j\xi_j,
\end{equation}
where $( K _n)_{n\in \mathbb{N}}$ is a  non--negative
and
non--increasing sequence
whuch is
 given by $ K _n:=\sum_{j=n+1}^\infty
a_j$, $n\in \mathbb{N}$. Therefore
\begin{equation} \label{eq.Uarchinfty}
U_{n+1}=\Bigl(\sigma+\beta\sum_{j=1}^n  K _{n-j}
U_j\Bigr)\xi_{n+1}, \quad n\geq 1.
\end{equation}
It can be seen that this governing equation for $U$ is similar in
structure to that describing the dynamics of ARCH$(\infty)$
processes (see \cite{lgpkrl:2000}, \cite{pkrl:2000},
\cite{Zaffaroni:2004}). However, $U$ is not an ARCH($\infty$)
process because it is the squares of volatility that obeys an
autoregressive equation of the form \eqref{eq.Uarchinfty} in the
ARCH($\infty$) case, whereas here it is merely the process $U$ or
$V$ itself in \eqref{eq.Uarchinfty} or \eqref{eq.Varchinfty}.

For ARCH($\infty$) processes, under conditions that imply the weak
stationarity of $U^2$ (see \cite{Zaffaroni:2004}), the rate of decay
of the autocovariance function of $U^2$ can be determined (see
\cite{pkrl:2000}, \cite{lgpkrl:2000}, \cite{lgds:2002},
\cite{Zaffaroni:2004}). It has been shown that if $U^2$ is
stationary, then its autocovariance function must be summable.

However, by applying to equation \eqref{eq.Varchinfty} the methods
of this paper, it can be shown that the conditions that
$( K _n)_{n\in \mathbb{N}}$ is square--summable, and
$\beta^2\sum_{j=0}^\infty  K _j^2<1$ are equivalent to the
asymptotic stationarity of $V$. Moreover, the function $\Delta
\mapsto \lim_{n\to \infty} \text{Cov}(V_n, V_{n+\Delta})$ is not
summable over $\mathbb{N}$ if and only if $\sum_{j=0}^\infty
ja_j=\infty$.


\bibliographystyle{elsarticle-harv}
 
\end{document}